\documentclass[conference]{IEEEtran}
\setcounter{secnumdepth}{3}
\usepackage{ifpdf}
\usepackage{cite}
\ifCLASSINFOpdf
   \usepackage[pdftex]{graphicx}
   \graphicspath{{../pdf/}{../jpeg/}}
   \DeclareGraphicsExtensions{.pdf,.jpeg,.png}
\else
   \usepackage[dvips]{graphicx}
   \graphicspath{{../eps/}}
   \DeclareGraphicsExtensions{.eps}
\fi
\usepackage{epsf,psfrag,epsfig}

\usepackage[cmex10]{amsmath}
\usepackage{mathrsfs, amsthm,amsfonts, latexsym, amssymb}
\usepackage{algorithmic,algorithm}
\usepackage{hhline}\usepackage[english]{babel}
\usepackage[mathscr]{eucal}
\usepackage{graphicx,color}
\usepackage[font=small]{caption}
\usepackage[font=footnotesize]{subcaption}
\usepackage{multicol,multirow}
\usepackage[breaklinks]{hyperref}
\usepackage[hyphenbreaks]{breakurl}
\hyphenation{op-tical net-works semi-conduc-tor}
\newtheorem{lem}{Lemma}
\newtheorem{thm}{Theorem}
\IEEEoverridecommandlockouts
\begin{document}
\title{Stochastic Interchange Scheduling in the Real-Time Electricity Market}

\author{Yuting Ji, Tongxin Zheng, {\em Senior Member, IEEE}, and  Lang Tong, {\em Fellow, IEEE}\thanks {\scriptsize This work is supported in part by the DoE CERTS program and the National Science Foundation under Grant CNS-1135844. Part of this work appeared in \cite{JiTong15PESGM}.

Y. Ji and L. Tong are with the School of Electrical and Computer Engineering, Cornell University, Ithaca, NY 14853, USA (e-mail: yj246@cornell.edu; ltong@ece.cornell.edu).

T. Zheng is with ISO New England Inc., Holyoke, MA 01040, USA (e-mail: tzheng@iso-ne.com).}}
\maketitle

\begin{abstract}
The problem of multi-area interchange scheduling in the presence of stochastic generation and load is considered.   A new interchange scheduling technique based on a two-stage stochastic minimization of overall expected operating cost is proposed.  Because directly solving the stochastic optimization is intractable, an equivalent problem that maximizes the expected social welfare is formulated. The proposed technique leverages the operator's capability of forecasting locational marginal prices (LMPs) and obtains the optimal interchange schedule without iterations among operators.
\end{abstract}

\begin{IEEEkeywords}
Inter-regional interchange scheduling, multi-area economic dispatch, seams issue.
\end{IEEEkeywords}

\IEEEpeerreviewmaketitle
\section{Introduction}\label{sec:intro}
Since the restructuring of the electric power industry, independent system operators (ISOs) and regional transmission organizations (RTOs) have faced the seams issue characterized by  the inefficient transfer of power between neighboring regions.  Such inefficiency is caused  by incompatible market designs of independently controlled operating regions, inconsistencies of their scheduling protocols, and their different pricing models. The economic loss due to seams for the New York and New England customers is estimated at the level of \$784 million annually \cite{IRIS}.

There has been recent effort in addressing the seams issue by optimizing interchange flows across different regions.  In particular, a new interchange scheduling technique, referred to as Tie Optimization (TO),  is proposed in \cite{IRIS} to minimize the overall operating cost.  The Federal Energy Regulatory Commission (FERC)
has recently approved the Coordinated Transaction Scheduling (CTS) that allows market participants' participation in TO.  Implementations of various versions of CTS are being carried out by several system operators in the US \cite{ferc_app12}\cite{ferc_app14}.

One of the main challenges in eliminating seams is the inherent delay between the interchange scheduling and the actual power delivery across regions.  This is caused by the lack of real-time information necessary for scheduling and operation constraints.   For example, the information used in CTS for interchange scheduling is 75 minutes prior to the actual power delivery. With increasing integration of renewables,  interchange scheduling needs to be cognizant of uncertainty that arises between the time of interchange scheduling and that of power transfer. 

The goal of this paper is to obtain the optimal interchange schedule in the presence of system and operation uncertainty. To this end, we propose a two-stage stochastic optimization formulation aimed at minimizing the expected overall system cost.  The proposed optimization framework takes into account random fluctuations of load and renewable generations in the systems. Because directly solving the stochastic optimization is intractable, this paper presents an approach to transfer the stochastic optimization problem into an equivalent deterministic problem that maximizes the expected economic surplus. This transformation allows us to generalize the deterministic TO solution by intersecting expected demand and supply functions, therefore avoiding costly iterative computation between operators.

\subsection{Related Work}
There have been extensive studies on the seams issue. In this paper, we do not consider inefficiencies arise from market designs; we focus instead on optimizing the interchange schedule.    We highlight below approaches most relevant to the technique developed here.  For broadly related work, see \cite{IRIS,ConejoEtal06Springer_Decomposition, ZhaoLitvinovZheng14TPS,BaldickChatterjee14COR,LiEtal15TPS,ChenThorpMount04HICSS, IlicLang12} and  references therein.

Mathematically, optimal interchange scheduling can be obtained from the multi-area Optimal Power Flow (OPF) problem, which is a decentralized optimization of power flow that can be solved using various decomposition techniques \cite{ConejoEtal06Springer_Decomposition}. A general approach is based on the principle of Lagrangian Relaxation (LR) that decomposes the original problem into smaller subproblems. Some of the earliest approaches include the pioneer work of Kim and Baldick \cite{KimBaldick97TPS} and Conejo and Aguado \cite{ConejoAguado98TPS_MOPF} that predate the broad deregulation of the electricity market in the US.  Multi-area OPF problems that explicitly involve multiple ISOs have been widely studied \cite{CadwaladerHarveyPopeHogan98}\cite{ChenThorpMount04HICSS}.  In general, decentralized OPF based techniques typically require iterations between ISOs where one control center uses intermediate solutions from the other and solves its own dispatch problem.   Although the convergence of such techniques is often guaranteed under the DC-OPF formulation, the number of iterations can be large and the practical cost of communications and computations substantial.  We note that the recent marginal decomposition technique \cite{ZhaoLitvinovZheng14TPS} is shown to converge in a finite (but unknown) number of iterations.  

The growth of renewable integration has brought new attention to uncertainty in seams. Both stochastic optimization and robust optimization approaches have been considered recently.  In particular, Ahmadi-Khatir, Conejo, and Cherkaoui formulate a two-stage stochastic market clearing model in \cite{AhmadiConejoCherkaoui13TPS} for the multi-area energy and reserve dispatch problem.  The solution to the stochastic optimization is obtained based on scenario enumerations, which requires a prohibitively high computation effort.  In \cite{LiEtal15TPS}, the day-ahead tie-flow scheduling is considered in the unit commitment problem under wind generation uncertainty. Specifically, a two-stage adaptive robust optimization problem is formulated with the goal of minimizing the cost of the worst-case wind scenario, and solved by the column-and-constraint generation algorithm. The present paper complements these existing results by focusing on the  real-time interchange scheduling and develop a tractable stochastic optimization technique.

A pragmatic approach to the seams problem, one that has been adopted in practice and that can incorporate external market participants, is the use of proxy buses representing the interface between neighboring regions.  The technique presented here falls into this category.  Among existing prior work is the work by Chen et. al. \cite{ChenThorpMount04HICSS} where a coordinated interchange scheduling scheme is proposed for the co-optimization of energy and ancillary services. The technique is based on (augmented) LR involving iterations among neighboring control centers.  The work closest to ours are the TO technique presented in \cite{IRIS} and the work of Ilic and Lang \cite{IlicLang12}.  The underlying principle of \cite{IRIS} and \cite{IlicLang12} is based on the economics argument of supply and demand functions, which are exchanged by the neighboring operators.   For the (scalar) net interchange, such functions can be succinctly characterized, and the exchange needs to be made only once; the need of iterations among control centers is eliminated.  Our approach is also based on the same economics argument with the innovation on incorporating system and operation uncertainty.  Note that this type of approaches do not solve the multi-area OPF problem except the special case when there is a single tie line connecting the two operating regions.

\section{Deterministic Interchange Scheduling}\label{sec:iris}
\subsection{Proxy Bus Representation}
In practice, coordination between neighboring control regions and markets are typically through the use of proxy bus mechanism.  As pointed out in \cite{Harvey03Proxy}, a proxy bus models the location at which marginal changes in generation are assumed to occur in response to changes in inter-regional transactions. The proxy bus mechanism is utilized by all of the existing LMP based markets for representing and valuing interchange power \cite{Harvey03Proxy}. 

In this paper, we consider a power system consisting of two independently operated subsystems, as illustrated in Figure \ref{fig:sys1}.  Each operator selects a proxy bus to represent the location of import or export in the neighboring region. Specifically, as shown in Figure \ref{fig:proxy}, the operator from region 1 assumes a withdrawal $q$ at proxy bus $p_1$ and the operator from region 2 assumes an injection with the same quantity $q$ at proxy bus $p_2$.

\begin{figure}\centering
\includegraphics[scale=0.42]{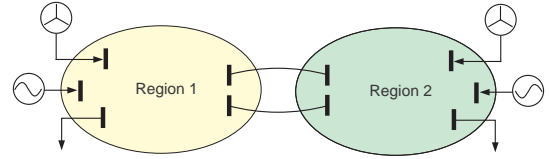}
\caption{A 2-region system with an interface.}\vspace{-1em} \label{fig:sys1}\end{figure}
\begin{figure}\centering\begin{psfrags}
\psfrag{t}[l]{\footnotesize {$q$}}
\psfrag{p}[c]{\scriptsize $p_1$}
\psfrag{q}[c]{\scriptsize $p_2$}
\includegraphics[scale=0.42]{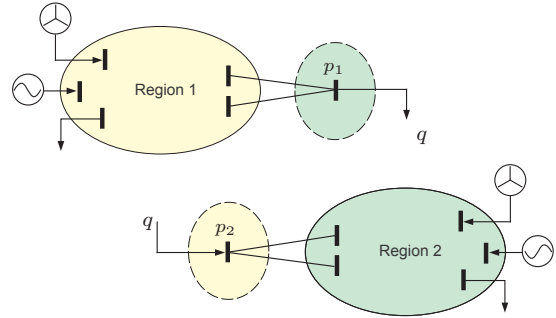}
\end{psfrags}\caption{The single proxy bus representation.}\vspace{-1em} \label{fig:proxy}\end{figure}

The interchange scheduling is to determine the value of the net interchange\footnote{The net interchange between two neighboring regions is the total amount of power flowing from one operating region to another.} $q$ that minimizes the overall operating cost subject to generation and transmission constraints. Note that, except when there is a single tie line that connects  two regions,  the proxy representation is an approximation, and the optimal interchange scheduling based on the proxy representation does not provide the optimal interchange of the original system.  In general, the optimal interchange via proxy representation is strictly suboptimal when it is compared with multi-area OPF solutions.

\subsection{Optimal Interchange Scheduling}
The interchange scheduling problem under the proxy bus model can be formulated as minimizing the generation costs of both regions with respect to the power balance, transmission (internal and interface) and generator constraints. For simplicity, we make the following assumptions throughout the paper: (i) the system is lossless, and (ii) the cost function $c_i(\cdot), i\in\{1,2\}$, is quadratic in the form $c_i(\cdot)=g_i^\intercal H_i g_i +q_i g_i$ where matrix $H_i$ is positive definite. Under the single proxy bus system, the net interchange can be modeled explicitly as an additional scalar variable in the optimization problem as follows:

\begin{equation*}\label{problem:min_cost}
\setlength\arraycolsep{2pt}\begin{array}{r l l}
(P_1)\quad \underset{q,g_1,g_2}\min &c_1(g_1)+c_2(g_2)&\\
\text{subject to} &\mathbf{1}^\intercal (d_1-g_1)+ q=0, & (\lambda_1)\\
&\mathbf{1}^\intercal (d_2-g_2)-q=0, & (\lambda_2)\\
&S_{1}(d_1-g_1)+S_{q1} q \leq F_1, &(\mu_1)\\
&S_{2}(d_2-g_2)-S_{q2} q \leq F_2, &(\mu_2)\\
&q \leq Q, & (\mu_{q})\\
&g_1\in \mathcal{G}_1,&\\
&g_2 \in \mathcal{G}_2,&
\end{array}
\end{equation*}
where

\hspace{-1em}\begin{tabular}{p{.3cm} p{7.9cm}}
$c_i(\cdot)$& real-time generation offer function for region $i$;\\
$d_i$&  vector of forecasted load and renewable generation for region $i$;\\
$q$&  net interchange from region $1$ to region $2$, if $q>0$; from region $2$ to region $1$, otherwise;\\
$g_i$&  vector of dispatches for region $i$;\\
$F_i$& vector of transmission limits for region $i$;\\
$Q$& interface limit;\\
$\mathcal{G}_i$& generator constraints for region $i$;\\
${S}_{ij}$&  shift factor matrix of buses in region $i$ to transmission lines in region $j$;\\
${S}_{qi}$& shift factor vector of buses in region $i$ to the interface;\\
$\lambda_i$& shadow price for power balance constraint in region $i$;\\
$\mu_i$& shadow prices for transmission constraints in region $i$;\\
$\mu_{q}$& shadow price for the net interchange constraint.
\end{tabular}\vspace{.2em}

The problem $(P_1)$ is a centralized formulation for determining the optimal interchange between region 1 and 2. Such an optimization problem requires a coordinator who have full access to all related information of both regions which is unsuitable in the present deregulated electricity markets. 

As in \cite{BaldickChatterjee14COR}, the centralized problem $(P_1)$ can be written in a hierarchical form of decentralized optimization as follows.
\begin{equation*}\label{problem:min_cost_m}
\setlength\arraycolsep{2pt}\begin{array}{r l l}
(P_2)\quad\underset{q}\min &c_1(g^*_1(q))+c_2(g_2^*(q))&\\
\text{subject to} &q \leq Q, & (\mu_{q})
\end{array}
\end{equation*}
where $g^*_i(q)$, $i\in\{1,2\}$, is the optimal dispatch for region $i$, given the interchange level $q$.

The regional dispatch problem for region $1$ is specified as
\begin{equation*}\label{problem:min_cost_a}
\setlength\arraycolsep{2pt}\begin{array}{r l l}
(P_{21})\quad \underset{g_1\in\mathcal{G}_1}\min &c_1(g_1)&\\
\text{subject to} &\mathbf{1}^\intercal (d_1-g_1)+q=0, & (\lambda_1)\\
&S_{1}(d_1-g_1)+S_{q1} q \leq F_1, &(\mu_1)\\
\end{array}
\end{equation*}
and for region $2$ as
\begin{equation*}\label{problem:min_cost_b}
\setlength\arraycolsep{2pt}\begin{array}{r l l}
(P_{22})\quad \underset{g_2\in\mathcal{G}_2}\min &c_2(g_2)&\\
\text{subject to} &\mathbf{1}^\intercal (d_2-g_2)- q=0, & (\lambda_2)\\
&S_{2}(d_2-g_2)-S_{q2} q \leq F_2. &(\mu_2)\\
\end{array}
\end{equation*}

Note that the optimization problem involves an outer problem $(P_2)$ to optimize the interchange level $q$, and an inner problem that is naturally decomposed into two regional problems, both parameterized by $q$. In other words, the optimizer and the associated Lagrangian multipliers for $(P_{21})$ and $(P_{22})$ are functions of $q$, \textit{i.e.}, $g^*_i(q)$, $\lambda^*_i(q)$, $\mu^*_i(q)$, $i \in \{1, 2\}$.

\subsection{Tie Optimization}
The key idea of TO is to determine the interchange schedule by intersecting the demand and supply curves. By interchange demand/supply curve we mean the incremental cost of the regional dispatch at the interface, which is essentially the LMP at the proxy bus. Given the interchange level $q$, the LMP at the proxy bus for region $i\in \{1, 2\}$ is defined as
\begin{equation}\label{def:pi}
\pi_i(q)=\lambda^*_i(q)+{S}_{qi}^\intercal \mu^*_i(q),
\end{equation}
where $\lambda^*_i(q)$ and  $\mu^*_i(q)$ are the Lagrangian multipliers associated with the optimal solution of $(P_{2i})$, for $i \in \{1, 2\}$. Note that function $\pi_i(q)$ is ISO $i$'s incremental dispatch cost at the interface at the net interchange $q$, which serves as a supply curve for the exporting ISO or a demand curve for the importing ISO. 

\begin{figure}\centering
\begin{psfrags}
\psfrag{Sa}[l]{\scriptsize $\pi_2(q)$}
\psfrag{Sb}[l]{\scriptsize $\pi_1(q)$}
\psfrag{p}[r]{\scriptsize $\pi^*$}
\psfrag{A}[c]{}
\psfrag{B}[c]{}
\psfrag{Price}[c]{\scriptsize $\pi$}
\psfrag{flow}[l]{\scriptsize $q$}
\psfrag{lim}[l]{\scriptsize $Q$}
\psfrag{f}[l]{\scriptsize {${q_\text{TO}^*}$}}
\includegraphics[width=.3\textwidth]{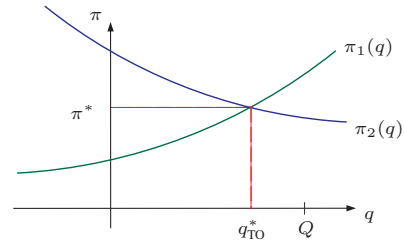}
\end{psfrags}\caption{Illustration of TO.}\vspace{-1em}\label{fig:to}\end{figure}

We use the graphical representation in \cite{IRIS} to illustrate the basic principle of TO. As shown in Figure \ref{fig:to}, $\pi_i(q)$ represents the generation supply curve for region $i$, but $\pi_2(q)$ is drawn in a descending cost order.  In this example, the direction of interface flow\footnote{The direction of interface flow can be determined by comparing the prices at $q=0$: if $\pi_1(0)<\pi_2(0)$, the power flows from region 1 to 2; otherwise, the direction of interface flow is opposite. } is from region 1 to region 2; $\pi_1(q)$ and $\pi_2(q)$ serve as the supply and demand curve respectively. The optimal  schedule $q_\text{TO}^*$ is set at the intersection of the two curves. Note that if this quantity exceeds the interface capacity $Q$, the schedule should be set at the maximum capacity instead. The interface transmission constraint, in this case, becomes binding and price separation happens between markets. It should also be noted that import or export transactions are settled at the {\em real-time LMP} which is calculated at the proxy bus after the delivery.

According to \cite{IRIS}, the interchange schedule of TO is the optimal solution to $(P_1)$ (as well as $(P_2)$). This intuitive argument is a manifestation of a deeper connection between social welfare optimization illustrated in Figure \ref{fig:to} and cost minimization defined by $(P_1)$.  In what follows, we will exploit this connection in the presence of uncertainty. 
\section{A Stochastic Interchange Scheduling}\label{sec:siris}
So far, we have described the interchange scheduling in a deterministic system setting. We now focus on the incorporation of random load and generation in the scheduling scheme.
\subsection{Stochastic Programming Formulation}
Stochastic optimization is the most common framework to model optimization problems involving uncertainty.  Consider the case that load (or stochastic generation, treated as a negative load) is random. The inter-regional interchange scheduling can be formulated as a two-stage stochastic optimization problem. The first stage involves optimizing the net interchange to minimize the expected overall cost
\begin{equation*}\label{problem:min_cost_m2}
\setlength\arraycolsep{2pt}\begin{array}{r l l}
(P_4)\quad\underset{q}\min &\mathbb{E}_{d_1, d_2}\left[c_1(g^*_1(q,d_1))+c_2(g_2^*(q,d_2))\right]&\\
\text{subject to} &q \leq Q, & (\mu_{q})
\end{array}
\end{equation*}
and the second stage solves the regional optimal dispatch problem given the interchange level $q$ and the realization of random load $d_1$ and $d_2$, which are specified as $(P_{21})$ and $(P_{22})$. Note that the optimal dispatch and the associated Lagrangian multipliers of $(P_{2i})$ are parameterized by two factors: the interchange level $q$ and the load realization $d_i$. So the LMP $\pi_i(q, d_i)$ at the proxy bus is a function of both $q$ and $d_i$.

Directly solving this problem requires the distribution of the regional cost function $c_i(q,d_i)$ at each interchange level, and  a coordinator to determine the optimal schedule, neither of which is achievable in the present deregulated electricity markets. In general, this two-stage stochastic optimization problem is intractable, especially when the load and renewable generation forecast follows a continuous distribution. The proposed scheduling technique, on the other hand, can solve this problem without increasing the computation complexity of deterministic TO. Details are provided in the next two subsections.

\subsection{Social Welfare Optimization}
The main idea of solving $(P_4)$ is to exploit the connection between cost minimization and social welfare optimization under uncertainty.  With the randomness present in the second stage of $(P_4)$, it is not obvious how the two-stage optimization problem can be transformed into a corresponding form of social welfare optimization.  It turns out that the optimal solution can be obtained by solving a deterministic TO problem using the {\em expected} demand and supply functions.

We now present an optimization problem from the import-export perspective, but taking into account that both import and export regions must agree on the forward interchange quantity in the presence of future demand and supply uncertainty.  Because the interchange quantity is fixed ahead of the actual power delivery, each region may have to rely on its internal resources to compensate uncertainty in real time.  To this end, it is reasonable for the export region to maximize its {\em expected producer surplus} and the import region to maximize its {\em expected consumer surplus}. 

\begin{figure}\centering
\begin{psfrags}
\psfrag{Sa}[l]{\scriptsize $\mathbb{E}_{d_2}[\pi_2(q,d_2)]$}
\psfrag{Sb}[l]{\scriptsize $\mathbb{E}_{d_1}[\pi_1(q,d_1)]$}
\psfrag{p}[r]{\scriptsize $\pi^*$}
\psfrag{Price}[c]{\scriptsize $\pi$}
\psfrag{flow}[l]{\scriptsize $q$}
\psfrag{lim}[l]{\scriptsize $Q$}
\psfrag{f}[l]{\scriptsize {${q_\text{STO}^*}$}}
\psfrag{A}[l]{}
\psfrag{B}[r]{}
\psfrag{S}[l]{}
\psfrag{SS}[c]{\scriptsize Producer surplus}
\psfrag{C}[c]{\scriptsize Consumer surplus}
\includegraphics[width=.3\textwidth]{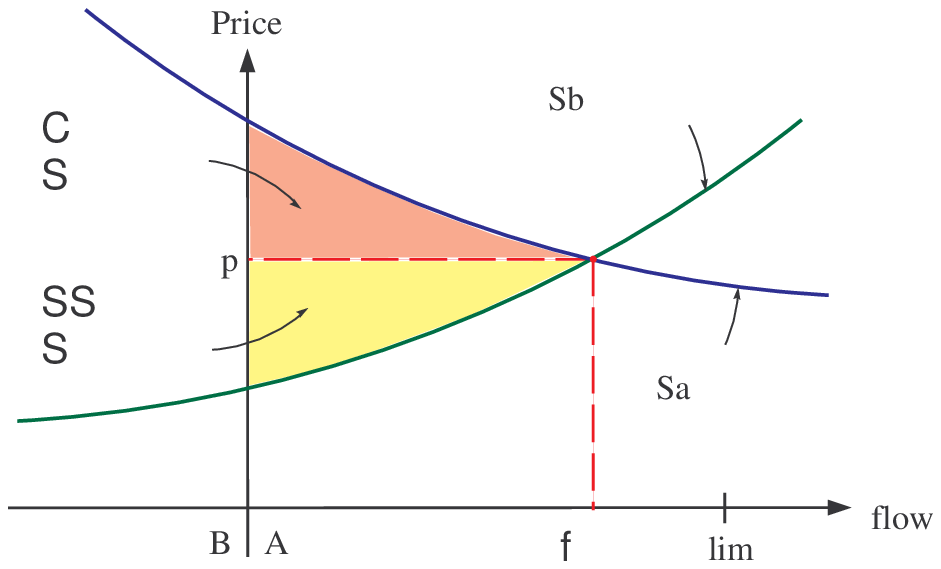}
\end{psfrags}\caption{Illustration of STO.}\vspace{-1em}\label{fig:sto}\end{figure}

Without loss of generality, let region 1 be the exporter.  For fixed interchange $q$, let $\pi_i(q,d_i)$ be the (random)  LMP at the proxy bus.  Then $\mathbb{E}_{d_1}[\pi_1(q,d_1)]$, as a function of interchange $q$,  is the expected supply curve averaged over its internal randomness.  Similarly, $\mathbb{E}_{d_2}[\pi_2(q,d_2)]$ is the expected demand curve averaged over the internal randomness in region 2 at the time of delivery.  

As shown in Figure \ref{fig:sto}, the optimal interchange quantity $q_\text{STO}^*$ that maximizes the expected social welfare (in absence of interface constraint) is simply the intersection of the expected demand and supply curves. In general, the interchange that maximizes the expected social welfare is given by

\begin{equation*}\label{problem:max_surplus_sto}
\setlength\arraycolsep{2pt}\begin{array}{r l l}
(P_5)\quad\underset{q}\max &\int_0^{q} \mathbb{E}_{d_2}[\pi_2(x, d_2)]-\mathbb{E}_{d_1}[\pi_1(x, d_1)]dx&\\
\text{subject to} &q \leq Q.&(\mu_{q})\\
\end{array}
\end{equation*}

To solve $(P_5)$, each operator needs to compute, for each interchange quantity $q$,  the expected LMP at its own proxy bus.   Such a computation requires the conditional expectation of future LMP at the time of delivery.  The conditional expectation can be obtained through probabilistic LMP forecast using models for load and generation. See, for example, \cite{JiTongThomas15ARXIV}. It is also conceivable that the conditional expectations can also be approximated via regression analysis.  Once the expected demand and supply functions are obtained, solving for the optimal interchange quantity becomes a one-dimensional search.  

\subsection{Stochastic Tie Optimization}
In this section, we establish formally the equivalence of $(P_4)$ and $(P_5)$ where the solution of $(P_5)$ solves the stochastic optimization problem $(P_4)$.
\begin{thm}\label{thm:eqv_s}
If the optimal dual solutions of $(P_{21})$ and $(P_{22})$ are unique for all $q\leq Q$,  then $(P_4)$ and $(P_5)$  are equivalent in the sense that they have the same optimizer.
\end{thm}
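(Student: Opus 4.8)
The plan is to prove the theorem by showing that, over the common feasible set $\{q\le Q\}$, the objective of $(P_4)$ and the objective of $(P_5)$ are negatives of each other up to an additive constant; the stated equivalence then follows immediately. Write $v_i(q,d_i):=c_i(g_i^*(q,d_i))$ for the optimal value of the regional problem $(P_{2i})$, $C(q):=\mathbb{E}_{d_1,d_2}[v_1(q,d_1)+v_2(q,d_2)]$ for the $(P_4)$ objective, and $W(q):=\int_0^q \mathbb{E}_{d_2}[\pi_2(x,d_2)]-\mathbb{E}_{d_1}[\pi_1(x,d_1)]\,dx$ for the $(P_5)$ objective. The heart of the argument is to differentiate $C$ in $q$ and identify $C'(q)$ with $\mathbb{E}_{d_1}[\pi_1(q,d_1)]-\mathbb{E}_{d_2}[\pi_2(q,d_2)]=-W'(q)$.

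For this I would fix a load realization $d_i$ and apply convex parametric sensitivity analysis to $(P_{2i})$, in which $q$ appears only (affinely) in the right-hand sides of the power-balance and transmission constraints. Since $c_i$ is strictly convex and the constraints are affine, $v_i(\cdot,d_i)$ is convex, and its subdifferential at $q$ is the image, under the linear map induced by the chain rule, of the set of optimal dual solutions of $(P_{2i})$; concretely this subdifferential is $\{\lambda_1^*(q,d_1)+S_{q1}^\intercal\mu_1^*(q,d_1)\}=\{\pi_1(q,d_1)\}$ for region 1 and $\{-(\lambda_2^*(q,d_2)+S_{q2}^\intercal\mu_2^*(q,d_2))\}=\{-\pi_2(q,d_2)\}$ for region 2 (the sign reflecting that $q$ enters $(P_{22})$ negatively). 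The hypothesis that the optimal dual solution is unique for every $q\le Q$ makes each of these subdifferentials a singleton, hence $v_i(\cdot,d_i)$ is differentiable with $\partial v_1/\partial q=\pi_1(q,d_1)$ and $\partial v_2/\partial q=-\pi_2(q,d_2)$; uniqueness of the duals also forces $\pi_i(\cdot,d_i)$ to be continuous, so $v_i(\cdot,d_i)\in C^1$. Taking expectations, interchanging $\tfrac{d}{dq}$ with $\mathbb{E}$ (justified since on the bounded range $q\le Q$ the dispatch and price maps are bounded and the value functions Lipschitz in $q$ uniformly in $d_i$, so dominated convergence applies), and comparing with the fundamental theorem of calculus applied to $W$, gives $C'(q)=-W'(q)$; integrating from $0$ to $q$ with $W(0)=0$ yields $C(q)=C(0)-W(q)$. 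Because $C(0)$ does not depend on $q$ and the two problems share the feasible region, the minimizer of $(P_4)$ and the maximizer of $(P_5)$ coincide, and (using strict convexity of $C$, equivalently strict concavity of $W$) this optimizer is unique, which is precisely the asserted equivalence.

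The main obstacle I expect is the sensitivity/differentiability step: one must argue carefully that uniqueness of the dual solution is exactly the condition that promotes the generic subdifferential formula for the value function (of Gauvin/Fiacco–McCormick type) to genuine differentiability with the clean expression $\partial v_i/\partial q=\pm\pi_i$, including at those parameter values $q$ where the set of binding transmission constraints changes and the derivative has a kink in its own derivative. A secondary technicality is the exchange of differentiation and expectation; if one prefers to avoid a dominated-convergence argument, the $C^1$ property gives the pathwise identity $v_i(q,d_i)-v_i(0,d_i)=\int_0^q \pi_i(x,d_i)\,dx$, after which Fubini's theorem moves $\mathbb{E}$ inside the integral and the same conclusion $C(q)=C(0)-W(q)$ is reached without ever differentiating an expectation.
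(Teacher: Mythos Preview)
Your argument is correct and in fact more direct than the paper's. Both you and the paper establish the core identity $C'(q)=\mathbb{E}_{d_1}[\pi_1(q,d_1)]-\mathbb{E}_{d_2}[\pi_2(q,d_2)]$ via an envelope-theorem/parametric-sensitivity step (the paper invokes multiparametric QP results---its Lemma~\ref{lem:mqp}---plus the Envelope Theorem; you invoke convex value-function subdifferential theory and the dual-uniqueness hypothesis). From there, however, the routes diverge: you integrate to obtain $C(q)=C(0)-W(q)$ and read off the equivalence of optimizers immediately, whereas the paper writes out the KKT first-order conditions of $(P_4)$ and $(P_5)$ separately, observes they are the same equation, and then invokes an additional monotonicity lemma for $\pi_i(\cdot)$ (Lemma~\ref{lem:mono}) together with a four-case analysis (according to whether $q^{*}$ and $q^{\sharp}$ hit the cap $Q$) to conclude that the two first-order systems pick out the same point. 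Your route buys brevity and dispenses with both the auxiliary lemma and the case split; the paper's route, by making the first-order condition explicit, foregrounds the economic content (expected proxy-bus prices equalize at an interior optimum), and its monotonicity lemma is of independent interest. One small caveat: your parenthetical appeal to \emph{strict} convexity of $C$ for uniqueness of the optimizer is not fully argued---convexity is immediate, strictness less so---but it is not needed for the equivalence as stated, since $C=C(0)-W$ already forces $\arg\min C=\arg\max W$ as sets over the common feasible region $\{q\le Q\}$.
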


Theorem \ref{thm:eqv_s} provides a new way, we call it Stochastic Tie Optimization (STO), to solve the intractable problem $(P_4)$. This result is significant because the optimal interchange can be obtained from a deterministic optimization problem $(P_5)$ which only requires the information of the expected supply and demand curves. Since these price functions are non-confidential information, $(P_5)$ can be solved by one of the operators if the other operator shares its price curve. In this way, operators do not need to iteratively update or exchange information within the scheduling procedure. This property is in contrast to most decomposition methods where subproblems are resolved and intermediate results are exchanged in each iteration. Because one-time information exchange is sufficient for the optimal schedule, operators do not need to repeatedly solve regional OPF, which is computationally expensive for sufficiently large systems, within the scheduling procedure. Such a property significantly reduces the computation costs in real time, thereby providing the potential of higher scheduling frequency.

Now we provide the proof of Theorem \ref{thm:eqv_s}.
\begin{proof}[Proof of Theorem \ref{thm:eqv_s}]
 We first show the differentiability of the objective functions of $(P_4)$ and $(P_5)$. This follows immediately from the well known results in multiparametric quadratic programming summarized in Lemma \ref{lem:mqp}.
\begin{lem}[\hspace{-.1em}\cite{mpc_book}] \label{lem:mqp}
If the dual problem of $(P_{2i})$, $i\in\{1,2\}$, is not degenerate for all $q\leq Q$, then 
\begin{enumerate}
\item the optimizer of $(P_{2i})$ and associated vector of Lagrangian multipliers are continuous and piecewise affine (affine in each critical region), and
\item the optimal objective of $(P_{2i})$ is continuous, convex and piecewise quadratic (quadratic in each critical region).
\end{enumerate}
\end{lem}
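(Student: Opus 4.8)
The plan is to treat $(P_{2i})$ as a strictly convex multiparametric quadratic program in the scalar parameter $q$ and to build the claimed piecewise structure from its Karush--Kuhn--Tucker (KKT) system, one active set at a time. Since $H_i\succ 0$, the objective $c_i$ is strictly convex in $g_i$, while the feasible set is polyhedral and depends on $q$ only through the right-hand sides of the equality constraint $\mathbf{1}^\intercal(d_i-g_i)\pm q=0$ and the inequality constraints $S_i(d_i-g_i)\pm S_{qi}q\le F_i$ together with $g_i\in\mathcal{G}_i$; crucially, $q$ enters these right-hand sides \emph{affinely}. Hence for each feasible $q$ the minimizer is unique and the KKT conditions (stationarity, primal feasibility, dual feasibility, complementary slackness) are both necessary and sufficient. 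I would record these conditions first.

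Next I would fix an optimal active set $\mathcal{A}$, namely the equality constraint (always active) together with the subset of inequality rows that hold with equality. On a fixed $\mathcal{A}$, complementary slackness eliminates the multipliers of the inactive constraints, and the remaining KKT relations collapse to a square linear system in the unknowns $(g_i,\lambda_i,\mu_{i,\mathcal{A}})$ whose coefficient matrix is assembled from $H_i$ and the active constraint gradients. The non-degeneracy hypothesis guarantees that these active gradients are linearly independent, so the KKT matrix is invertible and the active multipliers are uniquely determined. Inverting it expresses the optimizer $g_i^*(q)$ and the active multipliers as \emph{affine} functions of $q$, because $q$ appears only affinely on the right-hand side. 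This already yields the ``affine in each critical region'' part of claim (1).

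I would then characterize the critical region attached to $\mathcal{A}$ as the set of $q$ for which this candidate is genuinely optimal: those $q$ satisfying primal feasibility of the inactive inequalities and dual feasibility $\mu_{i,\mathcal{A}}(q)\ge 0$. Substituting the affine expressions from the previous step, both families become affine inequalities in the scalar $q$, so each critical region is an interval, and since there are only finitely many possible active sets the feasible range $q\le Q$ is covered by finitely many such intervals. On each region the optimal value is $V_i(q)=c_i(g_i^*(q))$ with $g_i^*(q)$ affine, hence a quadratic in $q$; this gives the piecewise-quadratic assertion of claim (2).

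The step I expect to be the main obstacle is gluing these independently derived pieces into globally \emph{continuous} (and, for the value function, \emph{convex}) objects, since the affine/quadratic formulas on neighboring intervals come from different KKT systems. Continuity of $g_i^*$ and the multipliers at an interval boundary I would obtain from uniqueness: strict convexity forces a single primal minimizer at every $q$, and non-degeneracy forces a single dual solution, so the one-sided affine limits from the two adjacent regions must agree. Continuity of $g_i^*$ then transfers to continuity of $V_i=c_i\circ g_i^*$. For convexity of $V_i$ I would argue structurally rather than piece-by-piece: the joint map $(g_i,q)\mapsto c_i(g_i)$ restricted to the affinely parameterized polyhedral feasible set is jointly convex in $(g_i,q)$, and $V_i$ is its partial minimization over $g_i$, an operation that preserves convexity. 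Combining continuity, convexity, and the quadratic pieces establishes the continuous, convex, piecewise-quadratic value function, completing both parts of the lemma.
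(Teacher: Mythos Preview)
Your proof sketch is correct and follows the standard textbook derivation for multiparametric quadratic programs: fix an active set, invert the resulting KKT system to obtain affine primal/dual solutions in $q$, delimit the critical region by primal and dual feasibility, and glue the pieces using uniqueness and the partial-minimization argument for convexity. There is nothing to fault in the logic.

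However, you should be aware that the paper does \emph{not} supply its own proof of this lemma. The statement is explicitly attributed to an external reference (the citation marker in the lemma header points to a standard multiparametric programming text), and the paper simply invokes it as a known result to establish differentiability of the objectives of $(P_4)$ and $(P_5)$ inside the proof of Theorem~1. So there is no ``paper's proof'' to compare against: what you have written is essentially a reconstruction of the cited textbook argument, which is more than the paper itself provides.
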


By Lemma \ref{lem:mqp}, the objective function of $(P_4)$, denoted by $J(q)$, is differentiable with derivative 
\setlength\arraycolsep{2pt}\begin{eqnarray*}J'(q)&=&\mathbb{E}_{d_1,d_2}\left[\frac{\partial}{\partial q}c_1(g_1^*(q,d_1))+\frac{\partial}{\partial q}c_2(g_2^*(q,d_2))\right]\\
&=&\mathbb{E}_{d_1}[\pi_1(q,d_1)]-\mathbb{E}_{d_2}[\pi_2(q,d_2)]
\end{eqnarray*}
where the second equality holds by the Envelope Theorem. Lemma \ref{lem:mqp} also implies that $\pi_1(q,d_1)$ and $\pi_2(q,d_2)$ are continuous functions, so the objective function of $(P_5)$, denoted by $s(q)$, is differentiable with derivative
\[s'(q)=\mathbb{E}_{d_1}[\pi_1(q^\sharp,d_1)]-\mathbb{E}_{d_2}[\pi_2(q^\sharp,d_2)].\]

Now we derive the connection between the optimal solutions to $(P_4)$ and $(P_5)$ from the first order conditions. The optimal solution $q^{*}$ to $(P_4)$ and the associated Lagrangian multiplier $\mu^{*}_{q}$ satisfy the first order condition for $(P_4)$:
\begin{equation}\label{eqn:foc}
\mathbb{E}_{d_1}[\pi_1(q^{*},d_1)]-\mathbb{E}_{d_2}[\pi_2(q^{*},d_2)]+\mu^{*}_{q}=0.
\end{equation}
Similarly, the optimal solution $q^\sharp$ to $(P_5)$ and  the associated Lagrangian multiplier $\mu^\sharp_{q}$ satisfy the first order condition for $(P_5)$
\begin{equation}
\mathbb{E}_{d_1}[\pi_1(q^\sharp,d_1)]-\mathbb{E}_{d_2}[\pi_2(q^\sharp,d_2)]+\mu^\sharp_{q}=0,
\end{equation}
which is exactly the same as (\ref{eqn:foc}).

Finally, we show $q^{*}=q^\sharp$. To prove this, we need the monotonicity of price function $\pi_i(q)$ (with fixed $d_i$ as defined in (\ref{def:pi})) which is summarized in the following lemma whose proof is provided in the appendix.
\begin{lem}\label{lem:mono}
If the dual problem of $(P_{2i}), i\in\{1,2\}$, has a unique optimal solution for all $q\leq Q$,  then $\pi_1(q)$ is monotonically increasing and $\pi_2(q)$ is monotonically decreasing.
\end{lem}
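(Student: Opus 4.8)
The plan is to identify each proxy-bus price $\pi_i(q)$ with the derivative of the optimal value of the corresponding regional dispatch problem, and then read off monotonicity from convexity of that value function. Let $V_i(q)$ denote the optimal objective of $(P_{2i})$ at interchange level $q$, defined on the set of $q\le Q$ for which $(P_{21})$ and $(P_{22})$ are feasible. Forming the Lagrangian of $(P_{21})$ and differentiating in $q$ at the optimum, the Envelope Theorem (applied exactly as in the proof of Theorem~\ref{thm:eqv_s}) gives $V_1'(q)=\lambda_1^*(q)+S_{q1}^\intercal\mu_1^*(q)=\pi_1(q)$ by \eqref{def:pi}; since $q$ enters both the power-balance and the transmission constraints of $(P_{22})$ with the opposite sign, an identical computation with the signs reversed gives $V_2'(q)=-\bigl(\lambda_2^*(q)+S_{q2}^\intercal\mu_2^*(q)\bigr)=-\pi_2(q)$.

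Next I would establish that $V_1$ and $V_2$ are convex on their domain. This is the key structural point and it is essentially part~2 of Lemma~\ref{lem:mqp}: in $(P_{2i})$ the parameter $q$ appears only affinely in the affine equality and inequality constraints, the cost $c_i$ is convex quadratic, and $\mathcal{G}_i$ is convex, so $(P_{2i})$ is a convex program parametrized linearly by $q$. Convexity of $V_i$ then follows from the standard argument: for $q=\theta q'+(1-\theta)q''$ the corresponding convex combination of the two optimizers is feasible at $q$, and convexity of $c_i$ bounds its cost by $\theta V_i(q')+(1-\theta)V_i(q'')$.

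Combining the two ingredients finishes the proof: a convex function has nondecreasing derivative wherever it is differentiable, so $\pi_1=V_1'$ is nondecreasing and $-\pi_2=V_2'$ is nondecreasing, i.e.\ $\pi_1$ is increasing and $\pi_2$ is decreasing. The one place that needs care is the passage from ``differentiable on all but finitely many points'' to ``monotone everywhere'': $V_i$ is only piecewise quadratic, so a priori $V_i'$ could jump across a critical-region boundary. This is precisely where the hypothesis that the dual of $(P_{2i})$ has a unique optimal solution enters — via Lemma~\ref{lem:mqp} it makes $\lambda_i^*(\cdot)$ and $\mu_i^*(\cdot)$, hence $\pi_i(\cdot)$, continuous and piecewise affine, so $V_i\in C^1$ with continuous derivative; a continuous function that is nondecreasing on the dense set where $V_i$ is classically differentiable is nondecreasing everywhere on $(-\infty,Q]$. (Alternatively, one may work directly with the one-sided derivatives of the convex $V_i$, which are monotone by definition, and identify them with $\pi_i$ by continuity, avoiding any appeal to $C^1$-smoothness.)
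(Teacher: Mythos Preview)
Your proof is correct and follows essentially the same route as the paper: identify $\pi_i$ with $\pm V_i'$ via the Envelope Theorem, use convexity of the value function $V_i$ to get monotonicity of its derivative, and invoke Lemma~\ref{lem:mqp} (continuity of the multipliers under dual uniqueness) to pass through critical-region boundaries. The only cosmetic difference is that you argue global convexity of $V_i$ directly from the parametric convex-programming structure, whereas the paper works region by region using the piecewise-quadratic characterization from Lemma~\ref{lem:mqp}; the ingredients and the logic are otherwise identical.
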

 Below we show that in each of the following cases, either the case itself is impossible or $q^{*}=q^\sharp$.
\begin{enumerate}
  \item $q^{*}=q^\sharp=Q$. The statement is trivially true.
  \item $q^{*}<Q$ and $q^\sharp <Q$.  In this case, the interface constraint is not binding in either problem, so we have $\mu_{q}^{*}=\mu_{q}^\sharp=0$, which implies that $\mathbb{E}_{d_1}[\pi_1(q^{*},d_1)]=\mathbb{E}_{d_2}[\pi_2(q^{*},d_2)]$ and $\mathbb{E}_{d_1}[\pi_1(q^\sharp,d_1)]=\mathbb{E}_{d_2}[\pi_2(q^\sharp,d_2)]$. By Lemma \ref{lem:mono} and the preservation of monotonicity under expectation operation, there is a unique solution to
$\mathbb{E}_{d_2}[\pi_2(q,d_2)]-\mathbb{E}_{d_1}[\pi_1(q,d_1)]=0$. Therefore, $q^{*}=q^\sharp$.
  \item  $q^{*}<q^\sharp =Q$. We construct a solution of $(P_1)$ using  $q^\sharp$ and the associated optimal functions defined in $(P_{21})$ and $(P_{22})$, \textit{i.e.}, $q^\sharp$, $\mu_{q}^\sharp$ (which is zero),  $g_1^*(q^{\sharp})$, $g_2^*(q^{\sharp})$,  $\lambda_1^*(q^{\sharp})$, $\mu_1^*(q^{\sharp}),\lambda_2^*(q^{\sharp}), \mu_2^*(q^{\sharp})$. Note that this solution satisfies the first order conditions for $(P_1)$, so it is optimal to $(P_1)$. However, this contradicts the uniqueness of the optimizer to $(P_1)$. Therefore, this case is impossible. 
  \item   $q^\sharp <q^{*}=Q$. This case is also impossible. The proof follows the logic of that in case 3).
\end{enumerate}

 To sum up, $(P_4)$ and $(P_5)$ are equivalent in the sense that they share the same optimal solution.
\end{proof}

\section{Stochastic CTS}\label{sec:cts}
In this section, we incorporate external market participants in STO, which generalizes the CTS proposal currently in implementation.  This generalization, we call it Stochastic Coordinated Transaction Scheduling (SCTS), is simply replacing the supply and demand curves used in CTS  by their expected values. 

As in CTS, market participants are allowed to submit requests to buy and sell power simultaneously on each side of the interface. Such request is called \textit{interface bid}, which includes a price indicating the minimum expected price difference between the two regions that the participant is willing to accept, a transaction quantity and its direction. 

We use a similar graphical representation of STO to illustrate the scheduling procedure of SCTS. As shown in Figure \ref{fig:scts}, $\mathbb{E}_{d_1}[{\pi}_1(q,d_1)]$ is the expected supply curve of region 1, and $\mathbb{E}_{d_2}[\tilde{\pi}_2(q,d_2)]$ is the adjusted curve of $\mathbb{E}_{d_2}[\pi_2(q,d_2)]$ by subtracting the aggregated interface bids $\pi_\text{bid}(q)$. The SCTS schedule is set at the intersection of $\mathbb{E}_{d_1}[\pi_1(q,d_1)]$ and $\mathbb{E}_{d_2}[\tilde{\pi}_2(q,d_2)]$. All the interface bids to the left of $q^*_\text{SCTS}$ are accepted and settled at the real-time LMP difference.

\begin{figure}\centering\begin{psfrags}
\psfrag{Sa}[l]{\scriptsize $\mathbb{E}_{d_2}[\pi_2(q,d_2)]$}
\psfrag{Sa2}[l]{\scriptsize $\mathbb{E}_{d_2}[\tilde{\pi}_2(q,d_2)]$}
\psfrag{Sb}[l]{\scriptsize $\mathbb{E}_{d_1}[\pi_1(q,d_1)]$}
\psfrag{p}[c]{\scriptsize $\pi^*$ }
\psfrag{f}[l]{\scriptsize {${q^*_\text{SCTS}}$}}
\psfrag{Price}[c]{\scriptsize $\pi$}
\psfrag{flow}[l]{\scriptsize $q$}
\psfrag{lim}[l]{\scriptsize $Q$}
\psfrag{A}[l]{}
\psfrag{B}[l]{}
\includegraphics[width=.3\textwidth]{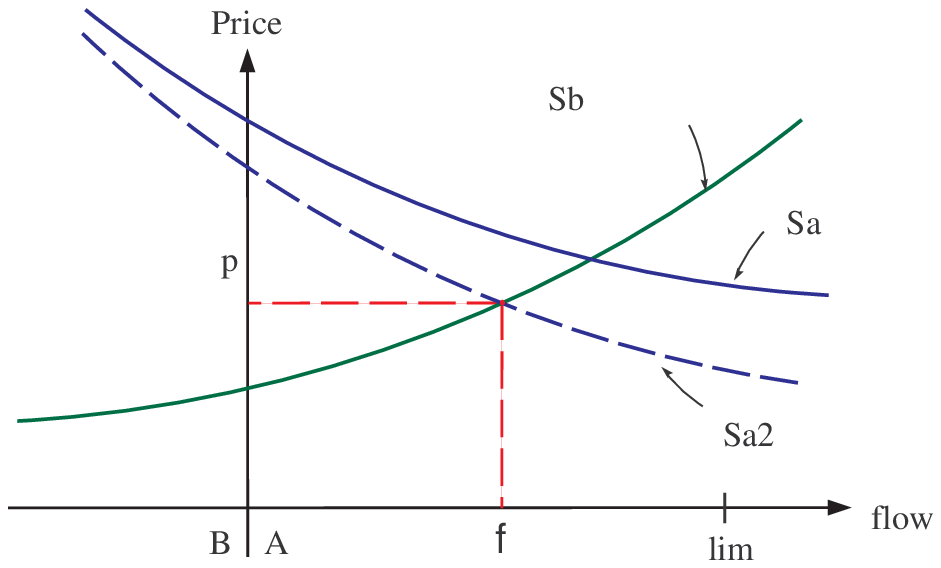}
\end{psfrags}\caption{Illustration of SCTS.}\label{fig:scts}\vspace{-1em}\end{figure}

The scheduling and clearing procedure described above is summarized as follows:
\begin{enumerate}
  \item  share the expected LMP functions $ \mathbb{E}_{d_1}[\pi_1(q,d_1)]$ and $ \mathbb{E}_{d_2}[\pi_2(q,d_2)]$;
  \item determine the direction of the interchange flow by comparing $\mathbb{E}_{d_2}[\pi_2(0,d_2)]$ and $\mathbb{E}_{d_1}[\pi_1(0,d_1)]$;
  \item construct the aggregated interface bid curve $\pi_\text{bid}(q)$ which is a stack of all interface bids with the direction determined in step 2) in an increasing order of the submitted price difference;
  \item calculate the optimal SCTS schedule from the following optimization problem $(P_6)$.
\end{enumerate}
  \[\setlength\arraycolsep{2pt}\begin{array}{r l}
(P_6)\quad\underset{q}\max &\int_0^{q} \mathbb{E}_{d_2}[\pi_2(x, d_2)]-\mathbb{E}_{d_1}[\pi_1(x, d_1)]-\pi_\text{bid}(x)dx\\
\text{subject to} &q \leq Q.\\
\end{array}\]

Note that the only difference between STO and SCTS is the inclusion of interface bids. All other components are identical.  This implies that one-time information exchange is sufficient; no iteration between operators is necessary during the scheduling procedure when one operator submits its expected generation supply curve to the other who executes this scheduling and clearing procedure.

\section{Evaluation}\label{sec:sim}
In this section, we compare the performance of the proposed STO with that of TO on two systems: a 6-bus system and the IEEE 118-bus system.  In particular, we focus on the two most common symptoms of seams: (i) the under-utilization of interface transmission, and (ii) the presence of counter-intuitive flows from the high cost region to the low cost region. In both examples, TO uses the certainty equivalent forecast of the stochastic generation, \textit{i.e.}, the mean value, while STO uses the probabilistic forecast, \textit{i.e.}, the distribution. Various scenarios are studied in these two examples.

\subsection{Example 1: a 2-region 6-bus system} Consider a 2-region 6-bus system as depicted in Figure \ref{fig:6bus}. Generator incremental cost functions, capacity limits, and load levels (the default values) are presented in the figure. All lines are identical except for the maximum capacities: the tie lines (line 2-6 and line 3-4) and the internal transmission lines in region 1 have the maximum capacities of $100$ MW, and the internal lines in region 2 have the maximum capacities of $200$ MW. The system randomness comes from the wind generator at bus 1 in region 1. The entire network model (the shift factor matrix) is assumed to be known to both ISOs. By default, we chose bus 3 as the proxy bus to represent the network in region 1, and bus 6 to represent the network in region 2. The impact of the location of proxy buses will be further investigated.
\begin{figure}\hspace{1em}\begin{psfrags}
\psfrag{1}[c]{\scriptsize 1}
\psfrag{2}[c]{\scriptsize 2}
\psfrag{3}[c]{\scriptsize 3}
\psfrag{4}[c]{\scriptsize 4}
\psfrag{5}[c]{\scriptsize 5}
\psfrag{6}[c]{\scriptsize 6}
\psfrag{A}[c]{\scriptsize Region 1}
\psfrag{B}[c]{\scriptsize Region 2}
\psfrag{l1}[c]{\tiny $d_2=30$}
\psfrag{l2}[c]{\tiny $d_5=250$}
\psfrag{g1}[c]{\tiny$\renewcommand{\arraystretch}{1.5}\begin{array}{c} 0\leq g_1\leq 120\\c_1=0.01g^2_1+10g_1 \\ \end{array}$}
\psfrag{g3}[c]{\tiny$\renewcommand{\arraystretch}{1.5}\begin{array}{c} 0\leq g_3\leq 200\\c_3=0.01g^2_3+40g_3\\  \end{array}$}
\psfrag{g4}[c]{\tiny$\hspace{1em}\renewcommand{\arraystretch}{1.5}\begin{array}{c} 0\leq g_4\leq 100\\c_4=0.01g^2_4+30g_4\\  \end{array}$}
\psfrag{g6}[c]{\tiny$\renewcommand{\arraystretch}{1.5}\begin{array}{c} 0\leq g_6\leq 200 \\c_6=0.01g^2_6+45g_6\\ \end{array}$}
\includegraphics[width=.48\textwidth]{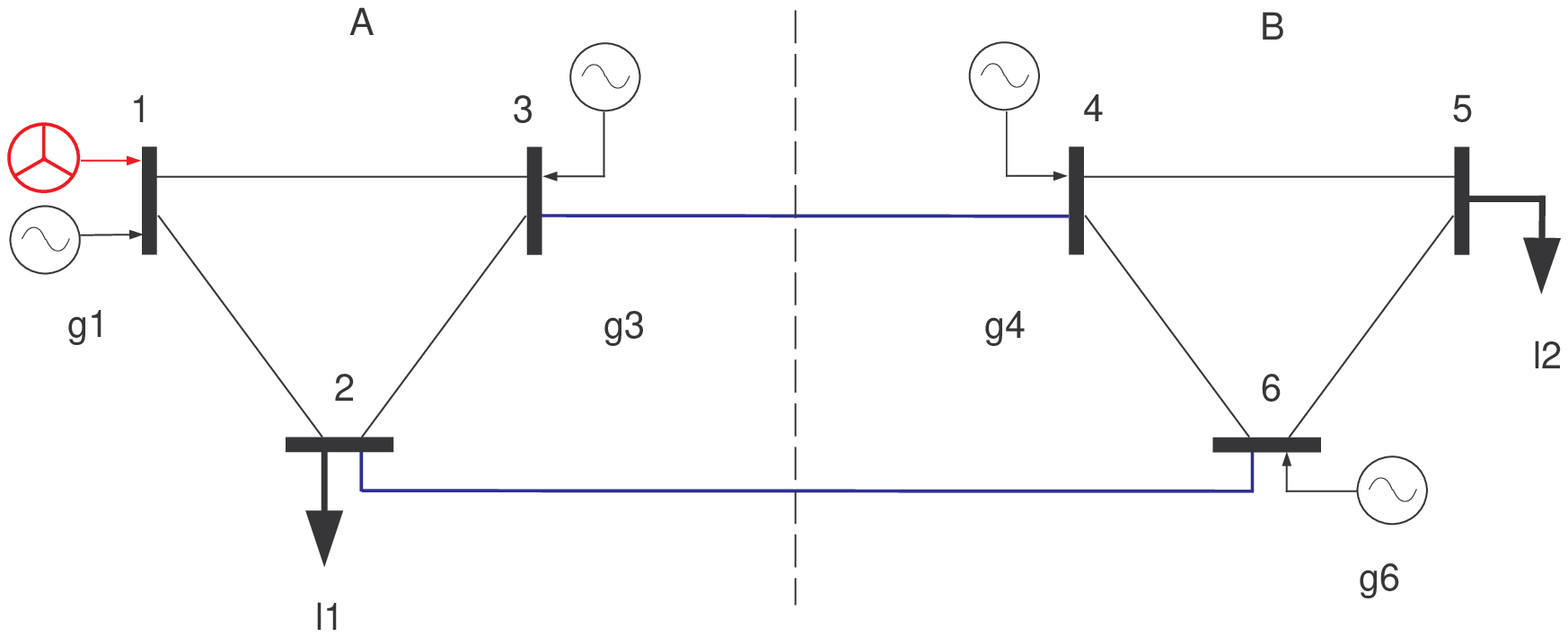}\end{psfrags}\caption{A 2-region 6-bus system.}\vspace{-1em}\label{fig:6bus}\end{figure}

\subsubsection{A baseline}
We first tested a baseline with the probabilistic wind forecast distribution $\mathcal{N}(55, 10^2)$. Two levels of load were chosen to illustrate the two symptoms of the inefficiency of TO schedule: the first load level $d_5=250$ is an example of the counter intuitive flow occurrence, and the second load level $d_5=200$ shows the case of the interface under utilization.  Results are presented in Figure \ref{fig:curve}-\ref{fig:pdiff} and Table \ref{table:base}.

Figure \ref{fig:curve} shows the generation supply curves of region 1 and region 2 under TO and STO for the two examples, respectively. $\pi_1^\text{TO}$ is the incremental cost of region 1 to deliver the power to the proxy bus 6 using the forecasted mean $55$ of the wind production, and $\pi_1^\text{STO}$ is the expected incremental cost using the forecast distribution $\mathcal{N}(55, 10^2)$. Since there is no randomness in region 2, the supply curves of region 2 for TO and STO are the same in both examples. At the interchange level of STO schedule, the expected overall system cost is minimized in both cases as shown in Figure \ref{fig:cost}, and the expected prices at the two proxy buses converge in both cases as shown in Figure \ref{fig:pdiff}. From Table \ref{table:base},  the expected price difference at the level of TO schedule in the first example is $-2.13$\$/MWh, which means that the expected price of the importing region (region 2) is lower then that of the exporting region. This implies that the interchange is scheduled from a high cost region to a low cost region, which is counter intuitive. On the other hand, in the second example, the expected price difference at the interchange level of TO schedule is $2.38$\$/MWh, \textit{i.e.}, the marginal price of the importing region is higher than that of the exporting region. With this price difference, increasing the interchange level can further reduce the expected overall cost, which implies the interchange capacity is under utilized. Because the interchange level of STO schedule is optimal as its design, any schedule more than this optimal level will cause the counter intuitive flow, and any schedule less than that will lead to the interface under utilization.

\begin{table}\caption{Comparison of TO and STO.}\label{table:base}\centering
\renewcommand{\arraystretch}{1.2}\begin{tabular}{|c|c|c|c|c|}\hline
Scenario&Method&$q^*$&$\mathbb{E}[\text{Cost}(q^*)]$&$\mathbb{E}[\Delta\pi(q^*)]$ \\ \hline
\multirow{2}{*}{$d_5=250$}&TO&$166.5$&$6794.2$&$-2.13$\\ \cline{2-5}
&STO&$162.8$&$6790.8$&$0$\\ \hline
\multirow{2}{*}{$d_5=200$}&TO&$147.5$&$4621.6$&$2.38$\\ \cline{2-5}
&STO&$151.4$&$4608.6$ &$0$ \\ \hline
\end{tabular}
\end{table}

\begin{figure}\centering
\begin{subfigure}{0.24\textwidth}
\includegraphics[width=\textwidth]{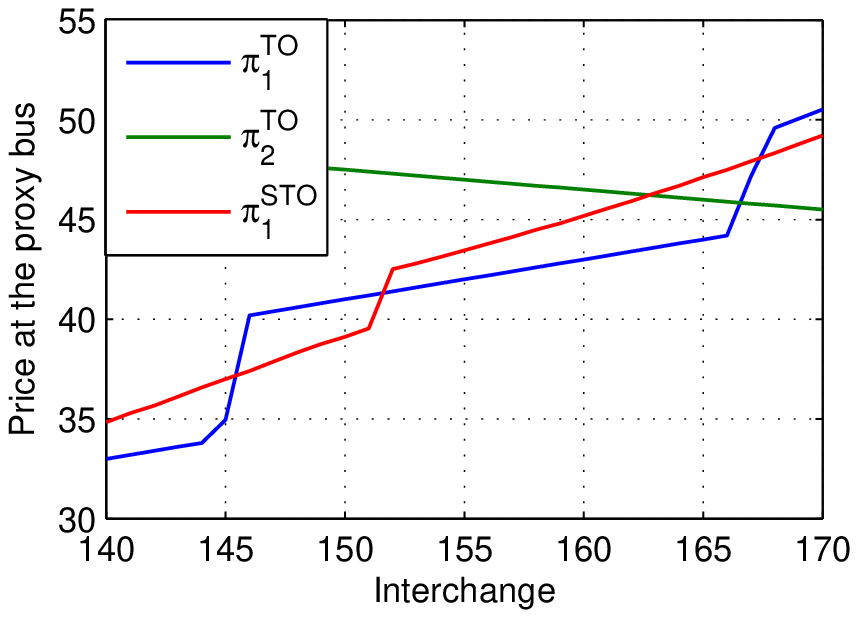}\caption{$d_5=250$.}
\end{subfigure}
\begin{subfigure}{0.24\textwidth}
\includegraphics[width=\textwidth]{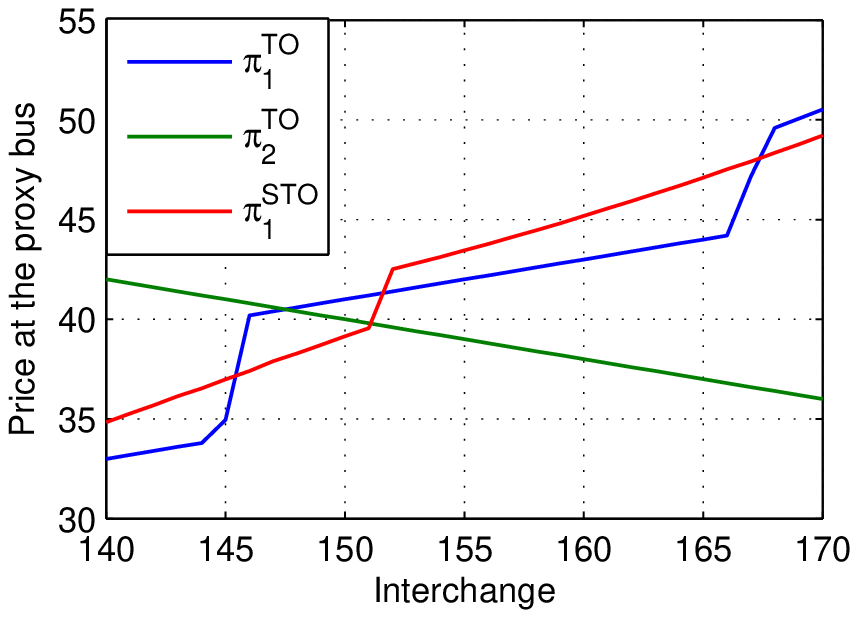}\caption{$d_5=200$.}
\end{subfigure}	
\caption{Generation supply curves.}\vspace{-1em}
\label{fig:curve}\end{figure}

\begin{figure}\vspace{-1em}\centering
\begin{subfigure}{0.24\textwidth}
\includegraphics[width=\textwidth]{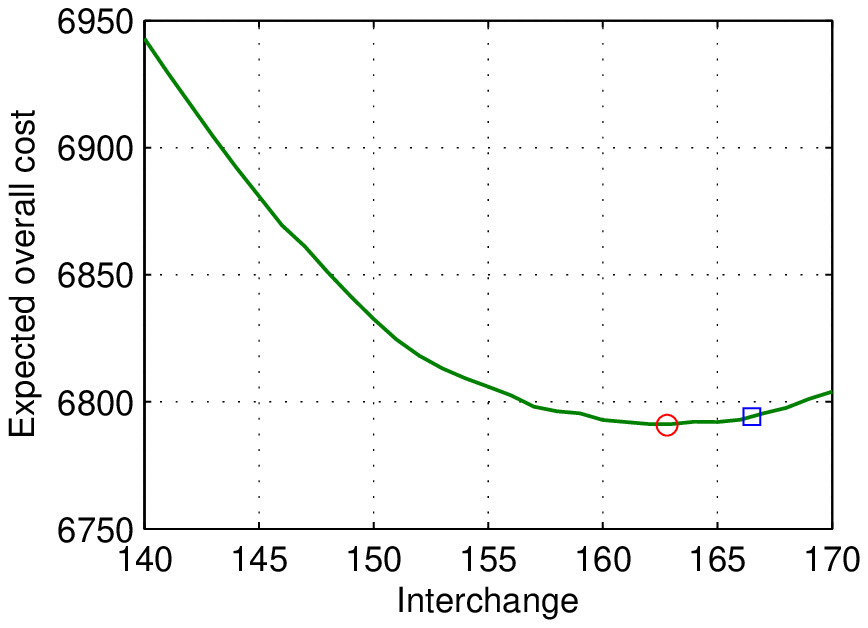}\caption{$d_5=250$.}
\end{subfigure}
\begin{subfigure}{0.24\textwidth}
\includegraphics[width=\textwidth]{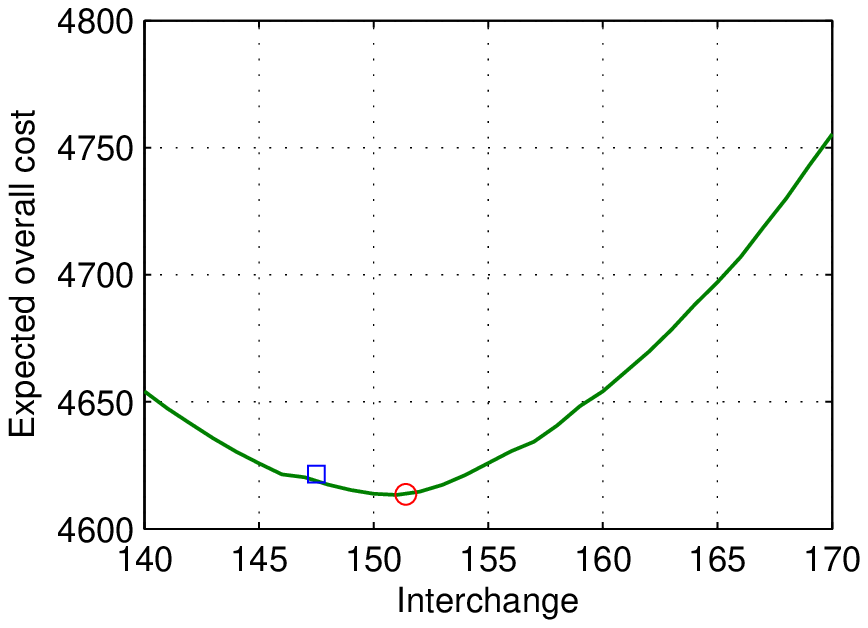}\caption{$d_5=200$.}
\end{subfigure}
\caption{Expected overall cost: TO is marked by the blue square and STO by the red circle.}
\label{fig:cost}\end{figure}

\begin{figure}\vspace{-1em}\centering	
\begin{subfigure}{0.24\textwidth}
\includegraphics[width=\textwidth]{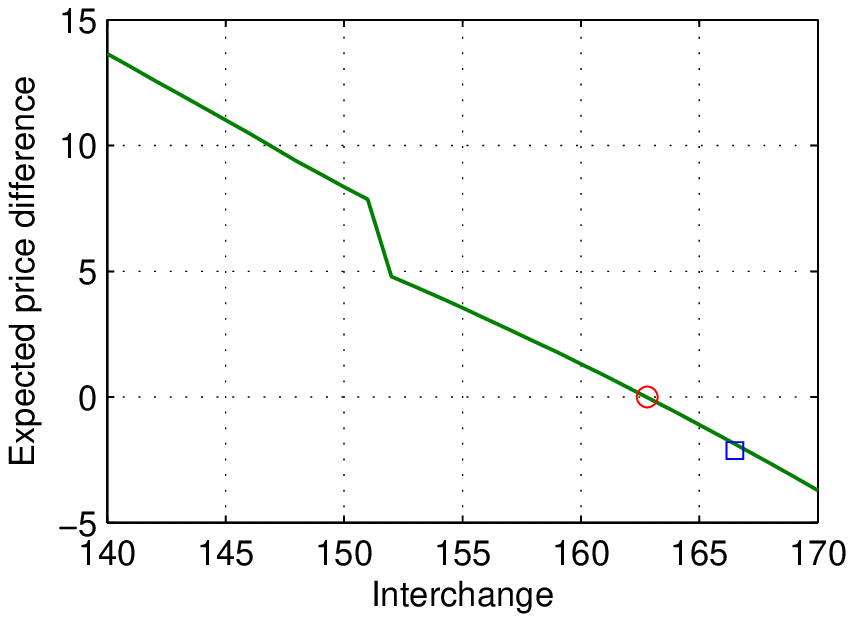}\caption{$d_5=250$.}
\end{subfigure}
\begin{subfigure}{0.24\textwidth}
\includegraphics[width=\textwidth]{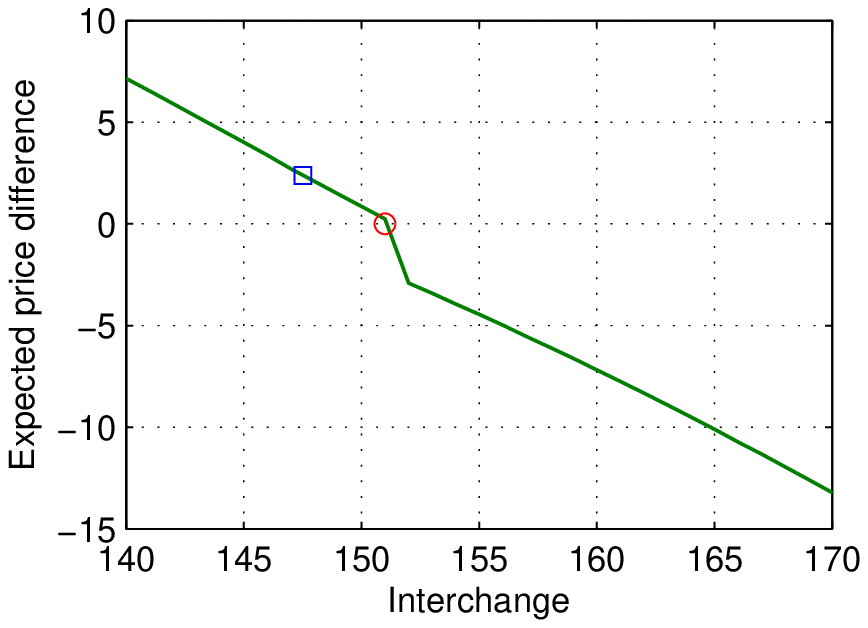}\caption{$d_5=200$.}
\end{subfigure}
\caption{Expected price difference: TO is marked by the blue square and STO by the red circle.}
\label{fig:pdiff}\end{figure}

\subsubsection{Impact of forecast uncertainty}
The impact of the forecast uncertainty level was then investigated by varying the standard deviation $\sigma$ of the probabilistic wind production forecast $w\sim \mathcal{N}(55, \sigma^2)$. Loads were set at the default values given in Figure \ref{fig:6bus}. Results are presented in Figure \ref{fig:sigma}.
\begin{figure}\centering
\begin{subfigure}{0.24\textwidth}
\includegraphics[width=\textwidth]{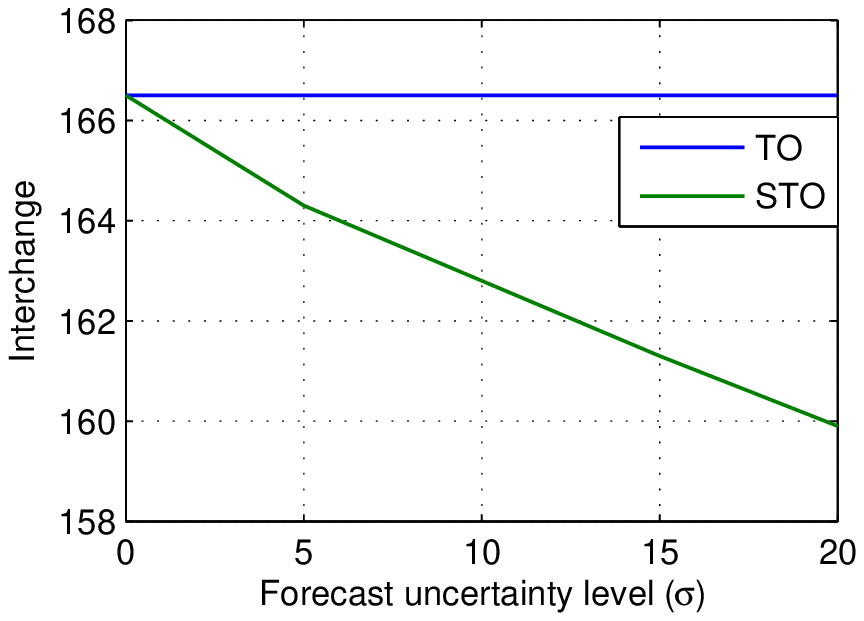}\caption{Interchange.}
\end{subfigure}
\begin{subfigure}{0.24\textwidth}
\includegraphics[width=\textwidth]{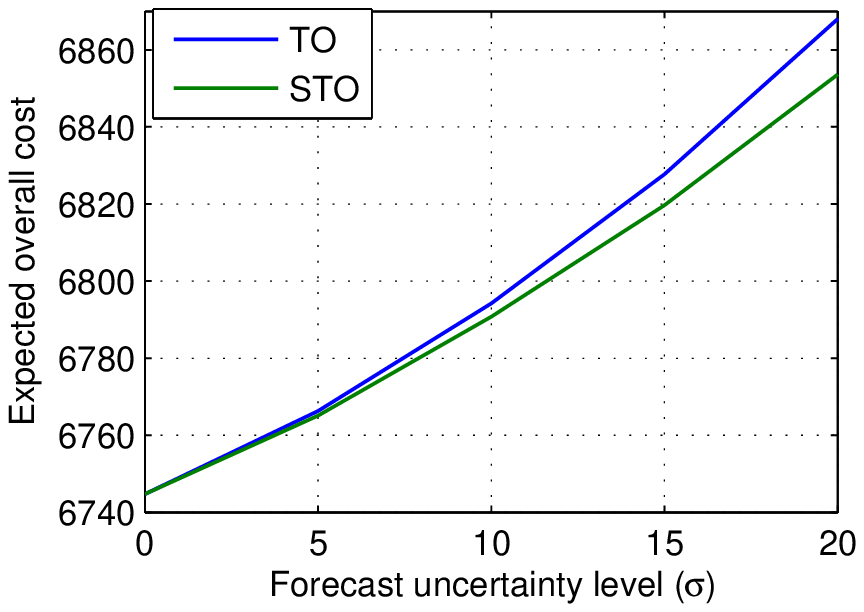}\caption{Expected overall cost.}
\end{subfigure}
\caption{Impact of the forecast uncertainty ($\sigma$).}\vspace{-1em}
\label{fig:sigma}\end{figure}

The interchange level of TO schedule does not change with $\sigma$ since it only uses the mean value $55$ of the wind production forecast. STO, on the other hand, captures the uncertainty level of the probabilistic forecast and adjusts the interchange schedule accordingly. The expected overall cost increases with the forecast uncertainty, which is observed in both TO and STO. When there is no uncertainty ($\sigma=0$), the schedules of TO and STO are the same and so do their costs.

\subsection{Example 2: a 2-region 118-bus system}
We divided the standard IEEE 118 bus system\footnote{All bus and branch indices are referred to \cite{matpower}.} into two regions: region 1 includes bus 1-12 and region 2 bus 13-118. Generator incremental cost functions, capacity limits, and load levels are the default values given in MATPOWER\cite{matpower}. We imposed the maximum capacity of $50$ MW on line 4, 6, 58 and 60. The interface transmission was not limited by default, but the impact of the interface constraint will be studied. Bus 6 and 42 were selected as the proxy buses to represent the adjacent region's network. 

To introduce randomness in the system, we assumed that three wind generators, located at bus 6, 42, and 60, produce power according to a discrete distribution. Specifically, denote the wind production by $w$, and the probabilistic forecast consists of a probability mass function $p$ and two levels of wind: $w=(10,10,10)$ and $w'=(100,200,200)$. We considered three scenarios: a high wind scenario $p=(0.1,0.9)$, a medium wind scenario $p=(0.5,0.5)$, and a low wind scenario $p=(0.9,0.1)$. TO uses the mean value $(91,181,181)$, $(55,105,105)$ and $(19,29,29)$ for each respective scenario.

\subsubsection{A baseline}\label{subsec:base118}
In this case, we verified the optimality of STO schedule with the presence of discrete randomness. All three wind scenarios were tested. Results are shown in Figure \ref{fig:118case1}-\ref{fig:118case3} and Table \ref{table:base118}.
\begin{table}\caption{Comparison of TO and STO.}\label{table:base118}\centering
\renewcommand{\arraystretch}{1.2}\begin{tabular}{|c|c|c|c|c|}\hline
Scenario&Method&$q^*$&$\mathbb{E}[\text{Cost}(q^*)]$&$\mathbb{E}[\Delta\pi(q^*)]$ \\ \hline
\multirow{2}{*}{$p=(0.1,0.9)$}&TO&$268.6$&$104751.6$&-0.25\\ \cline{2-5}
&STO&$263.1$&$104750.9$&0\\ \hline
\multirow{2}{*}{$p=(0.5,0.5)$}&TO&$242.6$&$109798.1$&-0.29\\ \cline{2-5}
&STO&$236$&$109797.1$ &0 \\ \hline
\multirow{2}{*}{$p=(0.9,0.1)$}&TO&$197$&$114806.8$&0.28\\ \cline{2-5}
&STO&$204$&$114805.8$ &0 \\ \hline
\end{tabular}
\end{table}

\begin{figure}\centering
\begin{subfigure}{0.24\textwidth}
\includegraphics[width=\textwidth]{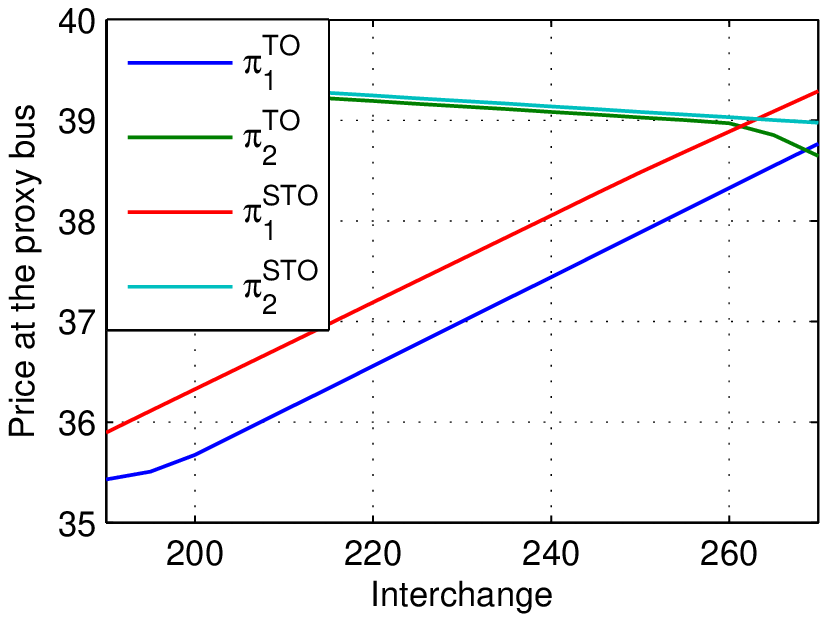}\caption{Generation supply curves.}
\end{subfigure}
\begin{subfigure}{0.24\textwidth}
\includegraphics[width=\textwidth]{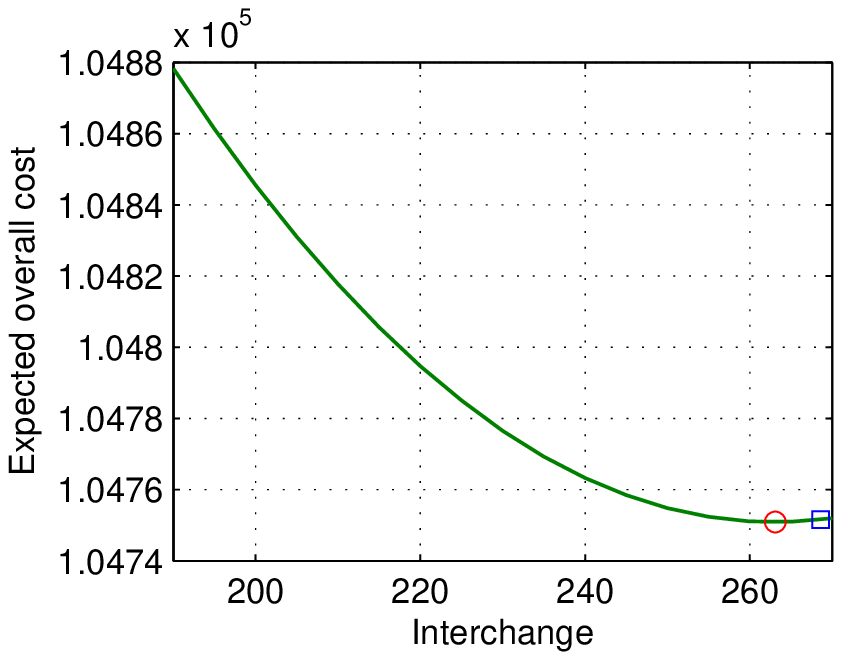}\caption{Expected overall cost.}
\end{subfigure}
\caption{High wind scenario $p=(0.1,0.9)$.}\vspace{-1em}
\label{fig:118case1}\end{figure}

\begin{figure}\centering
\begin{subfigure}{0.24\textwidth}
\includegraphics[width=\textwidth]{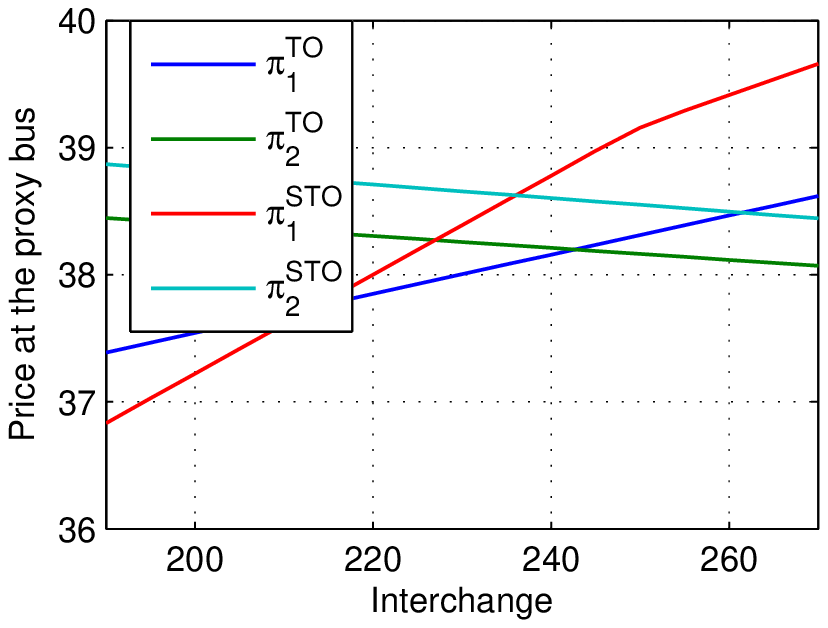}\caption{Generation supply curves.}
\end{subfigure}
\begin{subfigure}{0.24\textwidth}
\includegraphics[width=\textwidth]{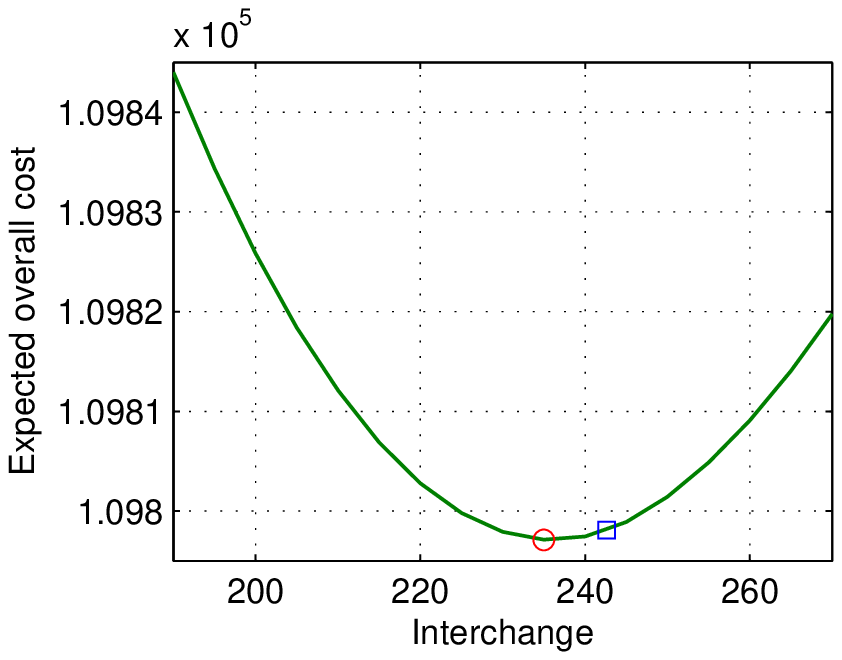}\caption{Expected overall cost.}
\end{subfigure}
\caption{Medium wind scenario $p=(0.5,0.5)$. }\vspace{-1em}
\label{fig:118case2}\end{figure}

\begin{figure}\centering
\begin{subfigure}{0.24\textwidth}
\includegraphics[width=\textwidth]{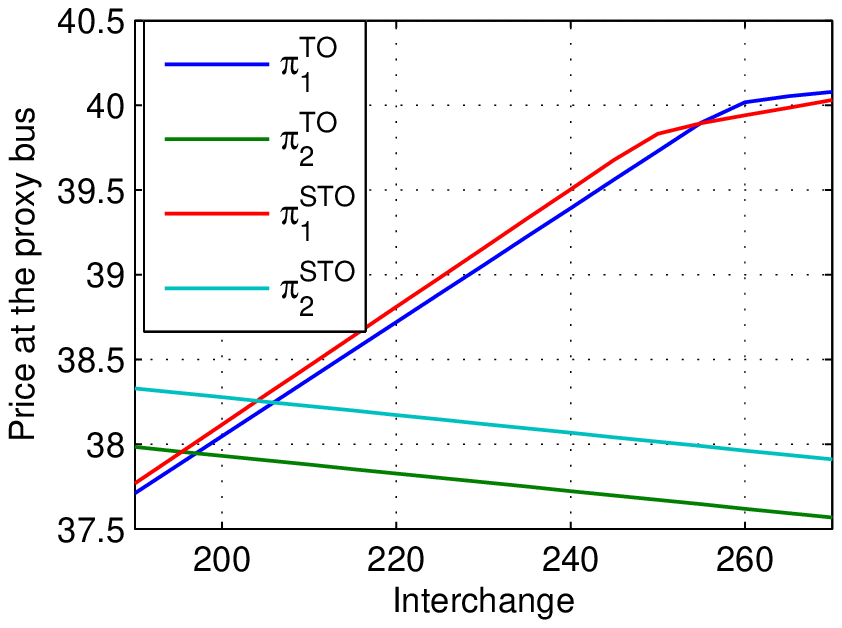}\caption{Generation supply curves.}
\end{subfigure}
\begin{subfigure}{0.24\textwidth}
\includegraphics[width=\textwidth]{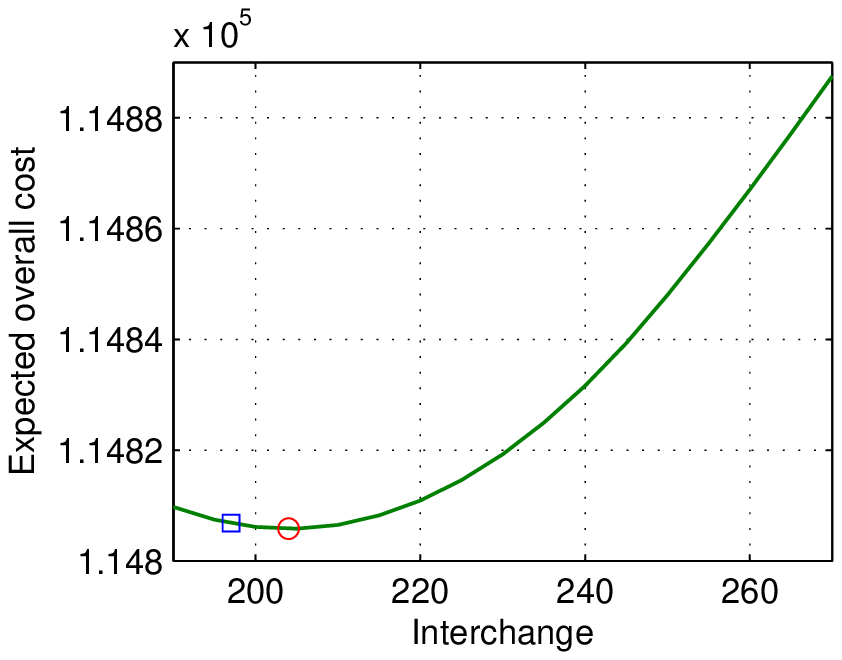}\caption{Expected overall cost.}
\end{subfigure}
\caption{Low wind scenario $p=(0.9,0.1)$. }
\label{fig:118case3}\end{figure}

Since all generation cost functions are quadratic, the price functions are continuous and piecewise affine. The performances of TO and STO schedule are similar to that in the 6-bus system example. The expected overall cost is minimized at the STO schedule in all three cases, as indicated by the red circles in Figure \ref{fig:118case1}-\ref{fig:118case3}, and the prices converge at the schedule of STO, as shown in Table \ref{table:base118}. As for TO schedules, the counter intuitive flows are observed in the high wind and medium wind scenario, and the interface under utilization happens in the low wind scenario.
\subsubsection{Impact of interface congestion}
To investigate the impact of the interface congestion, we tested all three wind scenarios again under the same setting except the interface capacity which was set as $250$ MW in this case. 

From the results shown in Table \ref{table:congestion118}, the presence of the interface constraint only influences the performances in the high wind scenario. The price separation happens in both TO and STO, because the binding interface constraint prevents the economic interface flow.
\begin{table}\caption{Impact of interface congestion.}\label{table:congestion118}\centering
\renewcommand{\arraystretch}{1.2}\begin{tabular}{|c|c|c|c|c|}\hline
Scenario&Method&$q^*$&$\mathbb{E}[\text{Cost}(q^*)]$&$\mathbb{E}[\Delta\pi(q^*)]$\\ \hline
\multirow{2}{*}{$p=(0.1,0.9)$}&TO&$250$&$104754.8$&0.6\\ \cline{2-5}
&STO&$250$&$104754.8$&0.6\\ \hline
\multirow{2}{*}{$p=(0.5,0.5)$}&TO&$242.6$&$109798.1$&-0.29\\ \cline{2-5}
&STO&$236$&$109797.1$ &0 \\ \hline
\multirow{2}{*}{$p=(0.9,0.1)$}&TO&$197$&$114806.8$&0.28\\ \cline{2-5}
&STO&$204$&$114805.8$ &0 \\ \hline
\end{tabular}\end{table}
\subsubsection{Impact of proxy bus location}
We finally tested the impact of proxy bus location in the medium wind scenario, \textit{i.e.} $p=(0.5,0.5)$. Tie line bus and internal bus (wind bus) were selected as the proxy buses. 
\begin{table}[ht]\caption{Impact of the proxy bus location.}\label{table:proxy2}\centering
\renewcommand{\arraystretch}{1.2}\begin{tabular}{|c|c|c|c|c|c|}\hline
&\multirow{2}{*}{\parbox{.8cm}{\centering Proxy Bus}}&\multicolumn{2}{|c|}{$q^*$}&\multicolumn{2}{|c|}{$\mathbb{E}[\text{Cost}(q^*)]$}\\ \cline{3-6}
&&TO&STO&TO&STO\\ \hline
\multirow{2}{*}{\parbox{.8cm}{\centering Internal Bus}}&(1,118)&244.1&238.1&109780.5&109779.7\\ \cline{2-6}
&(6,42)&242.6&236&$109798.1$&$109797.1$\\  \hline
\multirow{5}{*}{\parbox{.8cm}{\centering Tie Line Bus}}&(8,30)&253&271&109569.8&109565.3\\ \cline{2-6}
&(11,13)&78.4&126.5&110595.1&110540.4\\ \cline{2-6}
&(12,14)&28.8&-15.4&110865.5&110828.6\\ \cline{2-6}
&(12,16)&34&4.7&110847.2&110833.4\\ \cline{2-6}
&(12,117)&16.8&-23.3&110890.5&110785.7\\ \hline
\end{tabular}\end{table}
From the results presented in Table \ref{table:proxy2}, we observe that the interchange schedule and the associated expected cost are very sensitive to the location of the proxy bus in both TO and STO. With different selections of the proxy bus, the direction of the interchange schedule can be different, for example, (1,118) and (12, 117). Although there are several considerations that can guide the choice of proxy bus location \cite{Harvey03Proxy}, no theoretical results show a universal selection rule.

\section{Conclusion}\label{sec:con}
This paper presents a stochastic interchange scheduling technique that incorporates load and renewable generation uncertainties. Using the forecast of the expected LMP at the proxy bus, the proposed approach obtains the optimal interchange schedule from a deterministic optimization problem that maximizes the expected economic surplus. The essence of this technique is providing a way to reduce a two-stage stochastic optimization problem into a deterministic optimization problem with an one-dimensional decision. In addition, the proposed technique does not require any iteration between operators during the scheduling procedure. A one-time information exchange is sufficient for the optimal scheduling. 
\appendix
\begin{proof}[Proof of Lemma \ref{lem:mono}]
Denote the Lagrangian function for $(P_{2i})$ by $\mathcal{L}_{2i}, i\in\{1,2\}$. By Lemma \ref{lem:mqp}, $c_i(g_i^*(q))$ is convex and quadratic in each critical region, so the derivative exists. By the Envelope Theorem,
\[\frac{\partial c_1(g_1^*(q))}{\partial q}=\frac{\partial \mathcal{L}_{21}}{\partial q}=\pi_1(q),\frac{\partial c_2(g_2^*(q))}{\partial q}=\frac{\partial \mathcal{L}_{22}}{\partial q}=-\pi_2(q).\]

By Lemma \ref{lem:mqp},  $\pi_i(q)$ is affine in each critical region, so the derivative of $\pi_i(q)$ exists. In addition, $c_i(g_i^*(q))$ is quadratic, which implies that the second derivative of $c_i(g_i^*(q))$ (the derivative of $\pi_i(q)$) with respect to $q$ is positive. Therefore, $\pi_1(q)$ is monotonically increasing and $\pi_2(q)$ is monotonically decreasing within each critical region. Lemma \ref{lem:mqp} indicates that $\pi_i(q)$ is continuous for all $q\leq Q$, so the monotonicity of $\pi_i(q)$ is preserved for all $q\leq Q$.
\end{proof}

\Urlmuskip=0mu plus 1mu\relax
\bibliographystyle{IEEEtran}
{\bibliography{reference}}

\begin{thebibliography}{10}
\providecommand{\url}[1]{#1}
\csname url@samestyle\endcsname
\providecommand{\newblock}{\relax}
\providecommand{\bibinfo}[2]{#2}
\providecommand{\BIBentrySTDinterwordspacing}{\spaceskip=0pt\relax}
\providecommand{\BIBentryALTinterwordstretchfactor}{4}
\providecommand{\BIBentryALTinterwordspacing}{\spaceskip=\fontdimen2\font plus
\BIBentryALTinterwordstretchfactor\fontdimen3\font minus
  \fontdimen4\font\relax}
\providecommand{\BIBforeignlanguage}[2]{{%
\expandafter\ifx\csname l@#1\endcsname\relax
\typeout{** WARNING: IEEEtran.bst: No hyphenation pattern has been}%
\typeout{** loaded for the language `#1'. Using the pattern for}%
\typeout{** the default language instead.}%
\else
\language=\csname l@#1\endcsname
\fi
#2}}
\providecommand{\BIBdecl}{\relax}
\BIBdecl

\bibitem{JiTong15PESGM}
Y.~Ji and L.~Tong, ``Stochastic coordinated transaction scheduling,'' in
  \emph{Proc.~of IEEE PES General Meeting}, 2015, pp. 1--5.

\bibitem{IRIS}
\BIBentryALTinterwordspacing
{ISO New England and New York ISO}. {Inter-regional interchange scheduling
  (IRIS) analysis and options}. [Online]. Available:
  \url{http://www.iso-ne.com/pubs/whtpprs/iris_white_paper.pdf}
\BIBentrySTDinterwordspacing

\bibitem{ferc_app12}
FERC Approves Coordinated Transaction Scheduling Between New York ISO and ISO
  New England,
  \url{http://www.iso-ne.com/nwsiss/pr/2012/final_iso_ne_nyiso_cts.pdf}.

\bibitem{ferc_app14}
FERC Approves Coordinated Transaction Scheduling for PJM and NYISO,
  \url{http://www.pjm.com/~/media/about-pjm/newsroom/2014-releases/20140313-coordinated-transaction-scheduling-PJM-NYISO.ashx}.

\bibitem{ConejoEtal06Springer_Decomposition}
A.~J. Conejo, E.~Castillo, R.~Minguez, and R.~Garcia-Bertrand,
  \emph{Decomposition techniques in mathematical programming: engineering and
  science applications}.\hskip 1em plus 0.5em minus 0.4em\relax Springer
  Science \& Business Media, 2006.

\bibitem{ZhaoLitvinovZheng14TPS}
F.~Zhao, E.~Litvinov, and T.~Zheng, ``A marginal equivalent decomposition
  method and its application to multi-area optimal power flow problems,''
  \emph{IEEE Transactions on Power Systems}, vol.~1, no.~29, pp. 53--61, 2014.

\bibitem{BaldickChatterjee14COR}
R.~Baldick and D.~Chatterjee, ``Coordinated dispatch of regional transmission
  organizations: Theory and example,'' \emph{Computers \& Operations Research},
  vol.~41, pp. 319--332, 2014.

\bibitem{LiEtal15TPS}
Z.~Li, W.~Wu, M.~Shahidehpour, and B.~Zhang, ``Adaptive robust tie-line
  scheduling considering wind power uncertainty for interconnected power
  systems,'' \emph{IEEE Transactions on Power Systems}, 2015.

\bibitem{ChenThorpMount04HICSS}
J.~Chen, J.~S. Thorp, and T.~D. Mount, ``Coordinated interchange scheduling and
  opportunity cost payment: a market proposal to seams issues,'' in \emph{Proc.
  of the 37th Annual Hawaii International Conference on System Sciences}, 2004.

\bibitem{IlicLang12}
\BIBentryALTinterwordspacing
M.~Ilic and J.~Lang. {Methods of Selecting the Desired Net Interchange (DNI)
  Across Multi-Control Areas: Demonstration of Seams Solution for Large-Scale
  NPCC }. [Online]. Available:
  \url{http://www.ferc.gov/CalendarFiles/20120627090023-Wednesday_SessionA_Ilic.pdf}
\BIBentrySTDinterwordspacing

\bibitem{KimBaldick97TPS}
B.~H. Kim and R.~Baldick, ``Coarse-grained distributed optimal power flow,''
  \emph{IEEE Transactions on Power Systems}, vol.~12, no.~2, pp. 932--939,
  1997.

\bibitem{ConejoAguado98TPS_MOPF}
A.~J. Conejo and J.~A. Aguado, ``Multi-area coordinated decentralized dc
  optimal power flow,'' \emph{IEEE Transactions on Power Systems}, vol.~13,
  no.~4, pp. 1272--1278, 1998.

\bibitem{CadwaladerHarveyPopeHogan98}
M.~D. Cadwalader, S.~M. Harvey, S.~L. Pope, and W.~W. Hogan, ``Market
  coordination of transmission loading relief across multiple regions,''
  \emph{Cambridge, MA: Center for Business and Government, Harvard University},
  1998.

\bibitem{AhmadiConejoCherkaoui13TPS}
A.~Ahmadi-Khatir, A.~J. Conejo, and R.~Cherkaoui, ``Multi-area energy and
  reserve dispatch under wind uncertainty and equipment failures,'' \emph{IEEE
  Transactions on Power Systems}, vol.~28, no.~4, pp. 4373--4383, 2013.

\bibitem{Harvey03Proxy}
S.~Harvey, ``Proxy buses, seams and markets [draft],'' 2003,
  \url{http://www.hks.harvard.edu/hepg/Papers/Harvey_Proxy.Buses.Seams.Markets_5-23-03.pdf}.

\bibitem{JiTongThomas15ARXIV}
Y.~Ji, L.~Tong, and R.~J. Thomas, ``Probabilistic forecast of real-time lmp and
  network congestion,'' \emph{arXiv preprint arXiv:1503.06171}, 2015.

\bibitem{mpc_book}
\BIBentryALTinterwordspacing
F.~Borrelli, A.~Bemporad, and M.~Morari. (2014) Predictive control for linear
  and hybrid systems. [Online]. Available:
  \url{http://www.mpc.berkeley.edu/mpc-course-material}
\BIBentrySTDinterwordspacing

\bibitem{matpower}
\BIBentryALTinterwordspacing
R.~D. Zimmerman. {MATPOWER: A MATLAB power system simulation package}.
  [Online]. Available: \url{http://www.pserc.cornell.edu/matpower/}
\BIBentrySTDinterwordspacing

\end{thebibliography}
\end{document}